\documentclass[11pt,journal]{IEEEtran}
\usepackage[normalem]{ulem}
\usepackage{comment}
\usepackage{lipsum}
\usepackage{multirow}
\usepackage{nicefrac,supertabular}
\usepackage[switch]{lineno}
\usepackage{pifont}


\usepackage[boxed,vlined,ruled]{algorithm2e}
\SetAlCapNameFnt{\small}
\SetAlCapFnt{\small}
\setlength{\algomargin}{5pt}

\usepackage[utf8]{inputenc} 
\usepackage[T1]{fontenc}
\usepackage{url}
\usepackage{ifthen}
\usepackage{fancybox}

\usepackage[usenames,dvipsnames,table]{xcolor}


\usepackage[scaled=.80]{beramono}

\usepackage{array,graphicx}
\usepackage{amsfonts,cite,euscript}
\usepackage[cmex10]{amsmath}
\usepackage[amsmath]{empheq}
\usepackage{cases,balance,cite}
\usepackage{booktabs,mathrsfs,bbm,amssymb,soul,lipsum}
\usepackage{amsthm}
\usepackage{dsfont,pifont}

\usepackage{qrcode}


\usepackage[inline, shortlabels]{enumitem}
\setlist{itemjoin ={,\enspace}, itemjoin* = {, and\enspace}}

\graphicspath{{./Figures/}}

\usepackage[percent]{overpic}
\newcommand{\bpara}[1]		{\smallskip \noindent {\bf #1}}

\definecolor{modulo}{RGB}{72,0,255}
\definecolor{tof}{RGB}{255,0,63}

\usepackage{hyperref}
\hypersetup{
    colorlinks=true,
    linkcolor=black,
    filecolor=black,      
    urlcolor=black,
    citecolor = blue,
}

\theoremstyle{plain}
\newtheorem*{theorem*}{Theorem}
\newtheorem{theorem}{Theorem}

\newtheorem{lemma}{Lemma}

\usepackage[percent]{overpic}

\usepackage{physics}
\usepackage{tikz}
\usepackage{mathdots}
\usepackage{yhmath}
\usepackage{cancel}
\usepackage{multirow}

\usepackage{tabularx}
\usepackage{extarrows}
\usepackage{booktabs}
\usetikzlibrary{fadings}
\usetikzlibrary{patterns}
\usetikzlibrary{shadows.blur}
\usetikzlibrary{shapes}

\newcommand{%
\scalebox{}{\input{}}
}[2]{%
\scalebox{#1}{\input{#2}}
}

\newlength{\bracewidth}

\newcommand\myMakeUppercase[1]{
\begingroup
\let\psi\Psi
\let\omega\Omega
\let\gamma\Gamma
\def\alpha{A}
\MakeUppercase{#1}
\endgroup}


\def\N {\mathbb N}
\def\Z {\mathbb Z}
\def\R {\mathbb R}
\def\C {\mathbb C}

\def\iZ {\in \mathbb Z}

\def\iC {\in \mathbb C}

\def\DE {\stackrel{\rm{def}}{=}}

\def\ts{T_\mathsf{S}}

\def\os{\Omega_{\mathsf{S}}}

\def\oc{\omega_{p}}
\def\ob{\Omega_{\mathsf{B}}}

\DeclareMathOperator\supp{supp}



\def\xxl {x}
\def\xxlbp {\xxl}
\def\yyl {y}
\def\rrl {r}


\def\rrd {r}




\def\vphi {\varphi}
\def\vphiw {\widetilde{\varphi}_{p}}



%


%

%









\def\carg {S}

\def\divset {\mathcal{\carg}}

\def\bndsset {P}


\def\TNQ {T_{\mathsf{NS}}}

\usepackage{xspace}

\def\usalg {{{\selectfont\texttt{US}--\texttt{Alg}}}\xspace}

\def\madc {{$\mathscr{M}$\hspace{-0.15em}--{\selectfont\texttt{ADC}}}\xspace}

\def\l {\left(}
\def\r {\right)}



 %


\newcommand{\sqb}[1] {\left[ #1 \right]} 
\newcommand{\cb}[1]{\left\lbrace #1 \right\rbrace} 
\newcommand\rob[1] {\l #1 \r} 

\newcommand{\mat}[1] {\mathbf{#1}}
\newcommand\fig[1] {Fig.~\ref{#1}}

\newcommand{\fes}[1] {\left[\kern-0.15em\left[#1\right]\kern-0.15em\right]}
\newcommand{\fe}[1] {\left[\kern-0.30em\left[#1\right]\kern-0.30em\right]}
\newcommand{\flr}[1] {\left\lfloor #1 \right\rfloor}

\newcommand{\cil}[1] {\left \lceil #1 \right \rceil}

\newcommand{\maxn}[1] {{||#1||}_\infty} 

\newcommand{\MO}[1] {\mathscr{M}_\lambda ({#1} )}

\def\OC{\Omega_{\mathsf{C}}}
\def\OCt{\Omega_{\mathsf{C}}^{[p]}}
\newcommand{\enw}[2]{a_{#2}^{#1}}
\def\psioc {\Psi_{\OC}}
\def\psioct {\Psi_{\OCt}}

\renewcommand\bar\underline
\renewcommand\hat\widehat
\renewcommand\geq\geqslant
\renewcommand\leq\leqslant

\renewcommand\tilde\widetilde

\makeatletter
\def\moverlay{\mathpalette\mov@rlay}
\def\mov@rlay#1#2{\leavevmode\vtop{%
   \baselineskip\z@skip \lineskiplimit-\maxdimen
   \ialign{\hfil$\m@th#1##$\hfil\cr#2\crcr}}}
\newcommand{\charfusion}[3][\mathord]{
    #1{\ifx#1\mathop\vphantom{#2}\fi
        \mathpalette\mov@rlay{#2\cr#3}
      }
    \ifx#1\mathop\expandafter\displaylimits\fi}
\makeatother

\frenchspacing
\usepackage{flushend}
\usepackage[font=small]{caption}
\usepackage[margin=1in]{geometry}
\begin{document}

\onecolumn

\title{Unlimited Sampling of Multiband Signals:\\ Single-Channel Acquisition and Recovery}

\author{Gal Shtendel and Ayush Bhandari\\ [5pt]

{

\centering
\small 
\color{blue} IEEE Signal Processing Letters (in press). 

}

\thanks{ This work was supported in part by the the UKRI’s HASC Program under
Grant EP/X040569/1, in part by European Research Council’s Starting Grant
for ``CoSI-Fold'' under Grant 101166158, and in part by UK Research and Innovation council’s FLF Program ``Sensing Beyond Barriers via Non-Linearities''
MRC Fellowship under Award MR/Y003926/1. %
The authors are with the Department of Electrical and
Electronic Engineering, Imperial College London, South Kensington, London SW7 2AZ, U.K. (Email: \{g.shtendel21, a.bhandari\}@imperial.ac.uk or ayush@alum.mit.edu).}}

\markboth{IEEE Signal Proc. Letters, Vol. XX, No. X, October 2025}
{}
\maketitle

\begin{abstract}
In this paper, we address the problem of reconstructing multiband signals from modulo-folded, pointwise samples within the Unlimited Sensing Framework (USF). Focusing on a low-complexity, single-channel acquisition setup, we establish recovery guarantees demonstrating that sub-Nyquist sampling is achievable under the USF paradigm. In doing so, we also tighten the previous sampling theorem for bandpass signals. Our recovery algorithm demonstrates up to a 13x dynamic range improvement in hardware experiments with up to 6 spectral bands. These results enable practical high-dynamic-range multiband acquisition in scenarios previously limited by dynamic range and excessive oversampling. 
\end{abstract}

\begin{IEEEkeywords}
Analog-to-digital conversion, bandpass sampling, multiband sampling, Shannon sampling theory.
\end{IEEEkeywords}

\IEEEpeerreviewmaketitle

\section{Introduction}
Multiband (MB) signals are characterized by spectra consisting of multiple disjoint frequency bands, as shown in \fig{fig:coup_decoup}. Such signals arise in a wide range of applications, including communications \cite{Henthorn:2023:J}, radar \cite{Mishra:2019:B}, and cognitive radio \cite{Arjoune:2019:J}. When sampling MB signals, applying the Shannon–Nyquist theorem directly requires treating them as bandlimited to their total spectral span. This is often impractical, as it leads to prohibitively high sampling rates for analog-to-digital converters (ADCs). Mathematically, this excessive sampling rate represents the cost of ignoring the spectral gaps inherent in the Fourier structure of MB signals. When these gaps are carefully exploited, they enable \emph{sub-Nyquist sampling} strategies \cite{Lin:1998:J, Herley:1999:J, Venkataramani:2000:J, Feng:1996:C, Mishali:2009:J}.

\begin{figure}[!h]
\centering
  \captionsetup{width=.8\linewidth}
\includegraphics[width =0.6\textwidth]{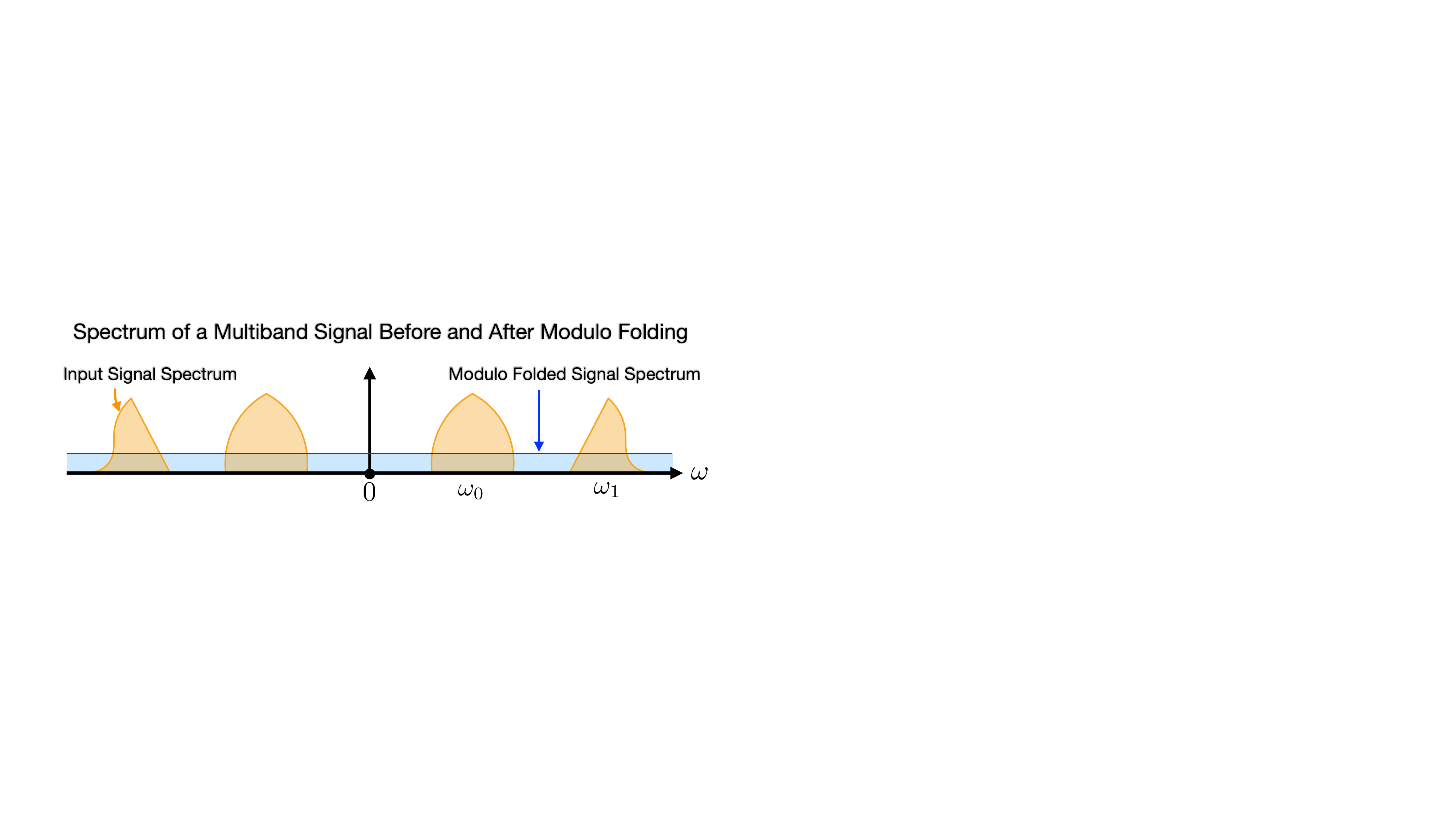}
\caption{The spectrum of a multiband signal $\xxl(t)$ (orange), and its corresponding modulo folded signal $\MO{\xxl(t)}$ (blue), which is not bandlimited.}
\label{fig:coup_decoup}
\end{figure} 

However, beyond sampling rate limitations, ADC quantization imposes a fundamental trade-off between high dynamic range (HDR) and high digital resolution (HDRes), a constraint that becomes particularly critical in MB scenarios where both strong and weak signal components coexist. In such cases, conventional ADCs may either saturate on strong signals or fail to capture weaker ones, giving rise to the well-known near–far problem \cite{Brannon:1995:J, Henthorn:2023:J}. Under a fixed bit budget \cite{Murmann:2015:J}, one can optimize for either HDR or HDRes, but not both simultaneously, creating a foundational bottleneck at the interface between the analog and digital domains.

The Unlimited Sensing Framework (USF) \cite{Bhandari:2017:C, Bhandari:2020:Ja, Bhandari:2021:J} introduces a novel digitization paradigm that breaks through the fundamental HDR–HDRes trade-off imposed by conventional ADCs, achieving both simultaneously under a fixed bit budget. At the core of USF is the modulo ADC (\madc) \cite{Bhandari:2021:J, Zhu:2024:C}, which folds the analog input signal into a low dynamic range before digitization. This folding prevents saturation and effectively enhances resolution within the given bit constraints. Signal reconstruction is then performed algorithmically to recover the original input. In practice, USF has demonstrated the ability to recover signals that are 60x larger \cite{Zhu:2024:C} than the folding threshold $\lambda$. Its advantages over traditional approaches have been validated across several domains, including computed tomography \cite{Beckmann:2022:J}, radar \cite{Feuillen:2023:C}, and communications \cite{Liu:2023:J}. 

USF naturally requires redundancy for signal recovery due to its \emph{non-linear} folding mechanism. This appears fundamentally at odds with the \emph{sub-Nyquist sampling} philosophy, which seeks to minimize redundancy by operating below the Nyquist rate. At first glance, the two approaches seem contradictory. However, a deeper insight—explored in recent work—is that when signals possess additional structure, such as bandpass structure \cite{Shtendel:2022:J} or complex exponentials \cite{Guo:2024:J}, USF can indeed enable sub-Nyquist recovery despite its many-to-one mapping.

An often overlooked aspect is that, even in the absence of noise, sub-Nyquist sampling typically demands high digital resolution. Without sufficient bit depth, algorithmic recovery is easily compromised by noise folding. USF offers a distinct advantage in this setting: for a fixed bit budget, it enhances effective quantization resolution \cite{Zhu:2024:C}, reducing the reliance on excessive bit depth. This makes USF-enabled sub-Nyquist schemes particularly attractive for real-world deployment \cite{Guo:2024:J}.

\bpara{Contributions.} 
In this paper, we address the problem of modulo sampling and recovery of MB signals. By leveraging their Fourier structure, we show that modulo unfolding enables HDR and HDRes acquisition beyond the reach of conventional methods. These findings are validated through both numerical simulations and hardware experiments.

To reduce hardware complexity and mitigate challenges related to calibration and synchronization in multi-channel systems, we adopt a single-channel design well-suited for compact, resource-constrained systems. The proposed architecture is backwards compatible with previous frameworks, such as hardware-demodulated radar systems, single-channel MB sampling \cite{Akos:1999:J,Tseng:2006:J}, and multi-channel periodic nonuniform sampling methods \cite{Feng:1996:C,Herley:1999:J}. Our key contributions are:

\begin{enumerate}[leftmargin=3em]
\item Theoretical guarantee for MB signal recovery from modulo samples, based on effective bandwidth. Moreover, the result tightens the existing USF-bandpass sampling theorem \cite{Shtendel:2022:J}.

\item Recovery algorithm for USF-based sub-Nyquist sampling. 

\item Hardware validation showing up to $13$x dynamic range improvement using $7$-bit quantized measurements, achieving clear gains over existing methods.

\end{enumerate}

\bpara{Notation.} The sets of real, complex-valued, and integer numbers are denoted by $\R, \C$, and $\Z$, respectively, and $\mathbbm{I}_N = \cb{0,\ldots, N-1}$ is the set of $N$ contiguous integers. Set cardinality is $\#$. Continuous functions are denoted by $f(t)$ with a Fourier transform (when it exists) denoted $\widehat{f}\rob{\omega}$. We define $\maxn{f}\DE \mathrm{sup}_{t \in \R} \abs{f(t)}$ for functions, and $\maxn{f}\DE \max_{k}{\abs{f\sqb{k}}}$ for sequences. 
The norm $\norm{f}_{1}$ refers to the $L^{1}$ norm if $f \in L^{1}$, or $\ell^1$ norm if $f \in \ell^1$. 

\section{Towards Recovery}
\label{sec:UnionUnfolding}
\bpara{Signal Model.} We consider complex-valued MB signals 
\begin{equation}
\label{eq:MBsig}
\xxl\rob{t} = \sum\limits_{p=0}^{\bndsset-1}\vphi_{p}\rob{t}e^{-\jmath \omega_{p}t}, \quad \vphi_{p} \in \mathcal{B}_{\ob},
\end{equation}
comprising $\bndsset$ baseband signals $\cb{\varphi_{p}}_{p=0}^{\bndsset-1}$ of bandwidth $2\ob$, which are modulated to frequencies  $\cb{\oc}_{p=0}^{\bndsset-1}$. The support of the Fourier transform $ \supp (\widehat{\xxl})$ is assumed to be \textit{disjoint}, that is $\abs{\oc - \omega_{q}} > \ob, \ \forall p \neq q, \ p,q \in \mathbb{I}_{\bndsset}$, as illustrated in \fig{fig:coup_decoup}. 

\bpara{Problem Formulation.} The \madc produces samples 
\begin{equation}
\label{eq:fwd}
\yyl\sqb{k} = \MO{\xxl}\big|_{t=k \ts}, 
\quad k \iZ,
\end{equation}
where $\MO{\cdot}$ is the centered modulo operator defined by $\MO{f}: f\rob{t} \mapsto \left( {\frac{\lambda }{\pi }} \right)\angle \exp \left( {\jmath \frac{\pi }{\lambda }f\left( t \right)} \right)$ with threshold $\lambda >0$. For a complex-valued input $z\in \mathbb{C}$, the folded signal is defined as $y = \MO{\Re{z}} + \jmath \MO{\Im{z}}$ as in \cite{FernandezMenduina:2021:J}. The corresponding inverse problem amounts to recovering $\xxl\sqb{k}=\xxl\rob{t}\big|_{t=k\ts}$ from 
the discrete modulo samples $\yyl\sqb{k}$.

\bpara{Solution Approach.}
Our recovery leverages the separation of modulo samples into their \emph{smooth} and \emph{non-smooth} components based on the modulo decomposition property \cite{Bhandari:2020:Ja}, 
\begin{equation}
\label{eq:decomp}
\yyl\sqb{k} = \xxl\sqb{k} - \rrl\sqb{k}, \quad \rrl\sqb{k} \in 2\lambda\Z,
\end{equation}
where $\xxl$ and $\rrl$ denote the MB and residual parts, respectively.

Though smoothness can be measured via a digital derivative filter\footnote{The filter $\Delta$ is used in the \usalg unfolding method presented in \cite{Bhandari:2020:Ja}, to shrink the amplitudes of $\Omega$-bandlimited signals for sufficiently small $\ts$. 
} $\Delta=\begin{bmatrix}
    -1 & 1
\end{bmatrix}$, this would  be \emph{non-ideal for modulated signals}. Therefore, we redefine the filter to accommodate local carrier
frequencies  $\cb{\oc}_{p=0}^{\bndsset-1}$, which achieves more precise shrinkage. 
To exemplify this, consider $P=1$ in \eqref{eq:MBsig} which results in  bandpass signal,  $\xxlbp(t) = \varphi_{0}(t)e^{-\jmath \omega_{0}t}$. Using $\psi = \begin{bmatrix}
-1 & e^{\jmath \omega_{0}\ts}\end{bmatrix}$ as an adaptation of $\Delta$,
The $N$-th order filter $\psi^{N} \DE \psi \ast \psi^{N-1}$ is related to $\Delta^{N} \DE \Delta \ast \Delta^{N-1}$ via,
\begin{equation}
     \psi^{N}\sqb{k} = \Delta^{N}\sqb{k}e^{-\jmath \omega_{0} k \ts}, \quad k= [0, 1, \cdots, N+1].
\end{equation}
Then,
\begin{equation}
\label{eq:psi_delta}
      (\psi^{N} \ast \xxl)\sqb{k} = e^{-\jmath \omega_{0} k \ts}(\Delta^{N}\ast \varphi_{0})[k].
\end{equation}
Defining the filter $\psi$ offers the advantage that the bound $\maxn{\psi^{N} \ast \xxl}$ is independent of $\omega_0$. As shown in \cite{Bhandari:2020:Ja}, $\maxn{\Delta^{N} \ast \varphi_0} \leq (\ts \ob e)^N \maxn{\varphi_0}$ for $\ts \leq 1 / (2\Omega_{\mathsf{B}} e)$, allowing us to choose $N$ such that $\maxn{\psi^{N} \ast \xxl}$ becomes arbitrarily small. This decay is exploited to isolate the samples $\rrl\sqb{k}$ in \eqref{eq:decomp}. Although conceptually similar to \usalg in \cite{Bhandari:2020:Ja}, the same technique cannot be directly applied since $\psi^{N}\sqb{k} \notin \Z$, necessitating a new recovery strategy. Section~\ref{subsec:bound} formalizes $\psi$ for $P > 1$, and Section~\ref{subsec:alg} introduces a recovery algorithm in the domain defined by $\psi^{N}$. The recovery approach is visually depicted in \fig{fig:AlgFlow}.

\subsection{The Proposed Filter} 
\label{subsec:bound}
\begin{figure}[!t]
\centering
\includegraphics[width =0.6\textwidth]{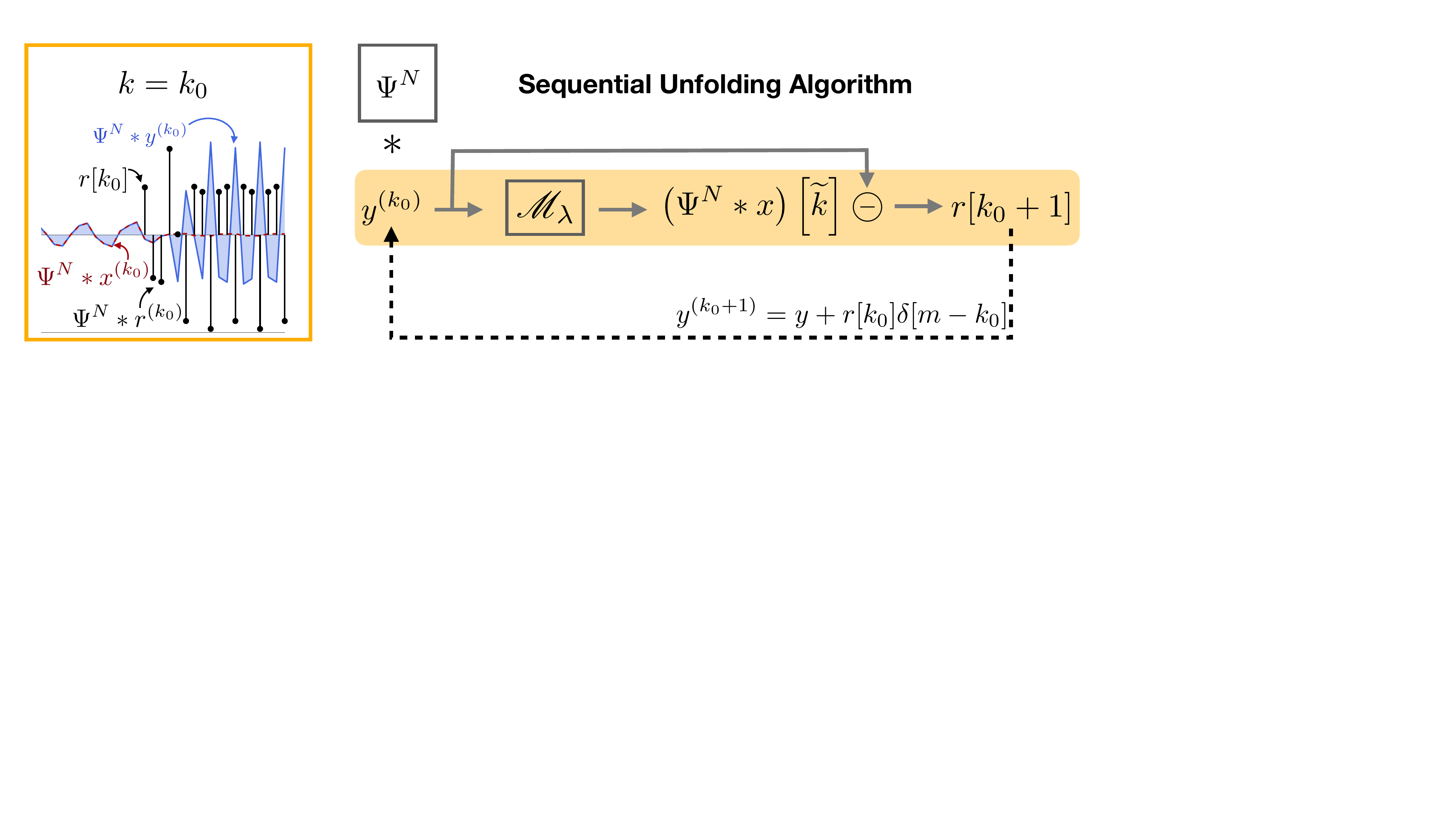}
\caption{Flowchart of the proposed recovery algorithm.} 
\label{fig:AlgFlow}
\end{figure}
The generalization from bandpass to the MB case is based on the filter $\psioc$ defined below, and on the next lemma.
 
\begin{lemma}
\label{lem:1}
Let $\xxl$ be defined in \eqref{eq:MBsig} with frequencies 
$\OC \DE \cb{\omega_{p}}_{p=0}^{\bndsset-1}$, $\# \OC =P$. Let $\xxl\sqb{k}=\xxl(t)\big|_{t=k \ts}$. 
Define
\begin{equation}
\label{eq:psi}
\psioc \DE \psi_{0} \ast \psi_{1} \ast \cdots \psi_{P-1},  \quad \psi_{p} \DE \begin{bmatrix}
-1 & e^{\jmath \oc\ts}\end{bmatrix},
\end{equation}
and $\psioc^{N} \DE \psioc \ast \psioc^{N-1}$. Let $\maxn{\varphi} \DE \max_{p}\maxn{\varphi_{p}}$. Then,
\begin{equation}
\label{eq:psi_bound}
    \maxn{\psioc^{N} \ast \xxl} \leq P (\ts 2^{P-1}\Omega_{\mathsf{B}}e)^{N}\maxn{\varphi}.
\end{equation}
\end{lemma}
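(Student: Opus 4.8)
The plan is to reduce the multiband estimate to the single-band identity \eqref{eq:psi_delta} one band at a time, and to dispose of the remaining elementary filters by a crude but sufficient $\ell^{1}$-operator-norm bound. First I would use linearity of convolution to write $\psioc^{N}\ast\xxl=\sum_{p=0}^{P-1}\psioc^{N}\ast\big(\varphi_{p}e^{-\jmath\oc\,\cdot}\big)$, where $\varphi_{p}e^{-\jmath\oc\,\cdot}$ denotes the sampled sequence $k\mapsto\varphi_{p}(k\ts)e^{-\jmath\oc k\ts}$. Since discrete convolution is commutative and associative, the kernel $\psioc^{N}$ of \eqref{eq:psi} factors, for each fixed $p$, as $\psioc^{N}=\psi_{p}^{N}\ast\rho_{p}$, where $\psi_{p}^{N}$ is the $N$-fold self-convolution of $\psi_{p}$ and $\rho_{p}\DE\psi_{0}^{N}\ast\cdots\ast\psi_{P-1}^{N}$ with the $p$-th factor deleted.

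Next I would apply \eqref{eq:psi_delta}, with $\omega_{0}$ replaced by $\oc$ and $\varphi_{0}$ by $\varphi_{p}$, to the inner factor: $\big(\psi_{p}^{N}\ast(\varphi_{p}e^{-\jmath\oc\,\cdot})\big)[k]=e^{-\jmath\oc k\ts}(\Delta^{N}\ast\varphi_{p})[k]$, whence $\maxn{\psi_{p}^{N}\ast(\varphi_{p}e^{-\jmath\oc\,\cdot})}=\maxn{\Delta^{N}\ast\varphi_{p}}\leq(\ts\ob e)^{N}\maxn{\varphi_{p}}$ by the amplitude-shrinkage bound of \cite{Bhandari:2020:Ja}, valid under $\ts\leq 1/(2\ob e)$. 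For the outer factor I would invoke Young's inequality $\maxn{\rho_{p}\ast w}\leq\norm{\rho_{p}}_{1}\maxn{w}$ together with submultiplicativity of the $\ell^{1}$ norm under convolution; since $\norm{\psi_{q}}_{1}=2$ this gives $\norm{\psi_{q}^{N}}_{1}\leq 2^{N}$ and hence $\norm{\rho_{p}}_{1}\leq 2^{N(P-1)}=(2^{P-1})^{N}$. Chaining the two estimates yields, for every $p$, $\maxn{\psioc^{N}\ast(\varphi_{p}e^{-\jmath\oc\,\cdot})}\leq(\ts 2^{P-1}\ob e)^{N}\maxn{\varphi_{p}}$.

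Finally, the triangle inequality over the $P$ bands together with $\maxn{\varphi_{p}}\leq\maxn{\varphi}$ gives $\maxn{\psioc^{N}\ast\xxl}\leq\sum_{p=0}^{P-1}(\ts 2^{P-1}\ob e)^{N}\maxn{\varphi_{p}}\leq P(\ts 2^{P-1}\ob e)^{N}\maxn{\varphi}$, which is \eqref{eq:psi_bound}.

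The only step that requires any thought is the factorization in the first paragraph combined with the estimate in the second: one must notice that although $\psi_{q}^{N}$ for $q\neq p$ fails to demodulate the $p$-th band — so the clean identity \eqref{eq:psi_delta} cannot be applied to the full kernel at once — its action is harmless because convolution against an $\ell^{1}$ sequence is a bounded operator on $\ell^{\infty}$, at the modest price of the factor $(2^{P-1})^{N}$ appearing in \eqref{eq:psi_bound}. Everything else is routine bookkeeping with norms; the sampling-rate hypothesis $\ts\leq 1/(2\ob e)$ inherited from the single-band bound of \cite{Bhandari:2020:Ja} should also be recorded.
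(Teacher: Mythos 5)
Your proof is correct and reaches the paper's bound with exactly the same constants, but the technical core is genuinely different and, in fact, leaner. The paper follows the same outer architecture — split $\xxl$ into its $P$ bands, reduce the matched band to $\Delta^{N}\ast\varphi_{p}$, control the leftover filter by H\"older ($\ell^{1}$ against $\ell^{\infty}$), then apply the triangle inequality over bands — but it establishes the demodulation identity \eqref{eq:Psi_vphi} by an induction on $N$ carried out in the supplementary material, with the taps of $\psioc^{N}$ written out via elementary symmetric polynomials, and it bounds $\lVert \psioct^{N}\rVert_{1}\leq 2^{N(P-1)}$ through Vandermonde's identity in \eqref{eq:psiN_mag}. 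You obtain both ingredients in one line each: the commutativity/associativity factorization $\psioc^{N}=\psi_{p}^{N}\ast\rho_{p}$ lets you apply the single-band identity \eqref{eq:psi_delta} directly to the matched factor, which immediately yields \eqref{eq:Psi_vphi} with $\rho_{p}=\psioct^{N}$ and no induction; and submultiplicativity of the $\ell^{1}$ norm under convolution, together with $\lVert\psi_{q}\rVert_{1}=2$, gives $\lVert\rho_{p}\rVert_{1}\leq 2^{N(P-1)}$ with no combinatorics. What the paper's longer route buys is an explicit closed form for the coefficients of $\psioc^{N}$ (reused later when normalizing the filter in the recovery algorithm); what your route buys is brevity and a transparent explanation of where the per-order factor $2^{P-1}$ in \eqref{eq:psi_bound} comes from — it is just the $\ell^{1}$ mass of the $P-1$ unmatched elementary filters. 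You were also right to record the hypothesis $\ts\leq 1/(2\ob e)$ inherited from the single-band bound of \cite{Bhandari:2020:Ja}; the lemma statement omits it and the paper only mentions it in the surrounding discussion.
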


\begin{proof}
Let $\vphiw[k]\DE \varphi_{p}\sqb{k}e^{-\jmath \omega_{p}k\ts}$ and $\OCt \DE \OC \setminus \cb{\omega_{p}}$. We will use the identity
\begin{equation}
\label{eq:Psi_vphi}
    \rob{\psioc^{N} \ast \vphiw}\sqb{k} = \rob{\psioct^{N}\ast \sqb{\rob{\Delta^{N}\ast \varphi_{p}}\sqb{m}e^{\jmath \oc (m)\ts}}}\sqb{k},
\end{equation}
which is proved by induction in the supplementary material. Taking magnitudes in \eqref{eq:Psi_vphi}, we obtain:
\begin{equation}
\label{eq:normPsiPhi}
    \norm{\rob{\psioc^{N} \ast \vphiw }}_{\infty} \leq \norm{\psioct^{N}}_{1}\maxn{\Delta^{N}\ast \varphi_{p}}.
\end{equation}
The filter coefficients are $\Psi_{\OC}\sqb{k} = (-1)^{k} \enw{k}{\OC}$, where we define $ \enw{k}{\OC}$ as the degree $k$ elementary symmetric polynomial\footnote{For a set of variables $\zeta = \{z_n\}_{n=0}^{N-1}$, the elementary symmetric polynomial of degree $k$ is
$
a_{\zeta}^k
= \sum\nolimits_{S \subseteq \mathbbm{I}_{N}, \#S = k}
z_{i_1} z_{i_2} \cdots z_{i_k}$.} 
in $P$ variables $\cb{e^{\jmath \omega_{p}t}}_{\omega_{p} \in \Omega_{\mathsf{C}}}$. 
For the $N^{\mathrm{th}}$ order filter,
\begin{gather}
    \psioc^{N}\sqb{k} = (-1)^{k}\sum\limits_{\mat{v} \in \mathcal{V}(N,k)} \prod\limits_{i = 0}^{N-1}\enw{\sqb{\mat{v}}_{i}}{\OC}, 
\end{gather}
where $ \mathcal{V}(N,k)= \{ \mat{v} \in \N^{N} \mid \sum_{i=1}^{N} \sqb{\mat{v}}_i = k \} $ consists of all length-$N$ vectors of non-negative integers summing to $k$. Hence,
\[
\lVert \psioct^{N} \rVert_{1} \leq \sum_{n=0}^{N(P-1)}\sum_{\mat{v} \in \mathcal{V}} \prod_{i = 0}^{N-1} | \enw{\sqb{\mat{v}}_i}{\OCt} |.
\]
Using Vandermonde's identity 
\[
\sum_{k=0}^{r} \binom{m}{k} \binom{n}{r - k} = \binom{m + n}{r} \mbox{ and } |\enw{\sqb{\mat{v}}_i}{\OCt}|= \binom{\OCt}{\sqb{\mat{v}}_i},
\]
where $\binom{m}{k}$ denotes the binomial coefficient, 
     \begin{equation}
     \label{eq:psiN_mag}
             \norm{\psioct^{N}}_{1} \leq 
    \sum_{n=0}^{N(P-1)}\binom{N \# \OCt}{n} = 2^{N(P-1)}.
     \end{equation}
Plugging \eqref{eq:psiN_mag} and the bound $\maxn{\Delta^{N} \ast \varphi_{p}}$ from \cite{Bhandari:2020:Ja} in \eqref{eq:normPsiPhi},
\begin{equation*}
        \norm{\rob{\psioc^{N} \ast \vphiw}}_{\infty} \leq 2^{N(P-1)}(\ts \ob e)^{N}\maxn{\varphi_{p}}.
\end{equation*}
Finally, by linearity of convolution and sum over $P$ terms, we obtain the bound $\maxn{\psioc^{N} \ast \xxl} \leq P \norm{ \psioc^{N} \ast \vphiw }_{\infty}$.
\end{proof}

From Lemma \ref{lem:1}, choosing $\ts \leq \tfrac{1}{2^{\bndsset}\ob e}$ ensures that the amplitudes of $\xxl$ shrink as $N$ increases.

\bpara{Bandlimited case ($P=1$).} The condition $\ts < 1 / (2\ob e)$ matches \cite{Bhandari:2020:Ja}, while the applicability is extended to one-sided bandpass signals, as $\omega_{0}$ does not affect the sampling rate.

\bpara{Bandpass Case ($P=2$).} For real-valued bandpass signals, we have $\ts \leq 1 / (4\ob e)$. As $\ts$ is independent of $\omega_{0}$, a new USF bandpass sampling condition can be deduced. We add the USF condition to the bandpass sampling formula in \cite{Vaughan:1991:J}:
\begin{equation}
\label{eq:bp}
    \frac{\pi(P-1)}{\omega_{0} - \ob} \leq \ts \leq \min{\cb{\frac{\pi P}{\omega_{0} + \ob}, \frac{1}{4\ob e}}}, \quad P \in \Z.
\end{equation}
Shown in \fig{fig:BPT}, the result is less restrictive compared to \cite{Shtendel:2022:J}.
\begin{figure}[!t]
\centering
\includegraphics[width =0.6\textwidth]{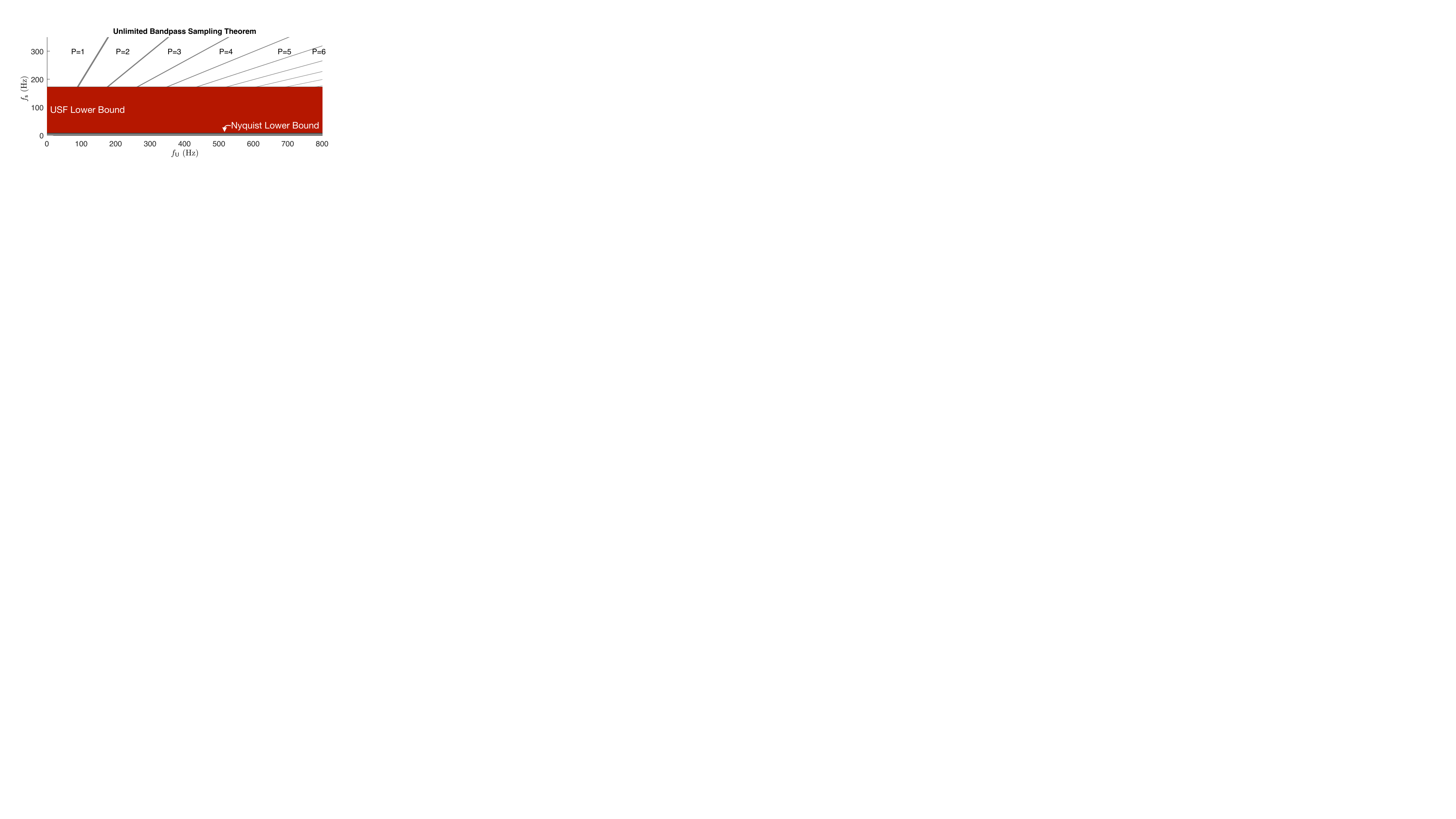}
\caption{Achievable (white) and unachievable (red) sampling rates under the Unlimited Sensing Framework as a function of the maximal frequency $f_{\mathsf{U}}$. They differ from the conventional result \cite{Vaughan:1991:J} only by a lower bound, while allowing high dynamic range recovery.}
\label{fig:BPT}
\end{figure}

\bpara{Multiband Case ($P>2$).} The condition $\ts < 1 / (2^{P}\ob e)$ implies an exponential decrease in $\ts$ with $P$, but this is still often lower than $\TNQ$ in practical cases, as it depends on the total bandwidth $\cb{\ob, \bndsset}$, rather than modulation frequencies. Notably, it does not impose constraints on inter-band spacing; therefore, it can be treated as an added constraint in algorithmic approaches such as \cite{Akos:1999:J, Tseng:2006:J} used in uniform MB sampling, without requiring reformulation or guard bands. 

\subsection{Recovery Algorithm}
\label{subsec:alg}
Convolving both sides of \eqref{eq:decomp} with the complex-valued filter\footnote{To relax the notation, the subscript $\OC$ is omitted.} $\Psi^{N} \iC^{P(N+1)}$ defined in \eqref{eq:psi} gives
\begin{equation}
\label{eq:filt_decomp}
     \rob{\Psi^{N}\ast\yyl}\sqb{k} =  \rob{\Psi^{N}\ast\xxl}\sqb{k} - \rob{\Psi^{N}\ast\rrl}\sqb{k}.
\end{equation}
Lemma \ref{lem:1} provides an upper bound  on
$\maxn{\Psi^{N}\ast\xxl}$ under the specified conditions. Then, the following unfolding algorithm can recover $\xxl\sqb{k}$ from modulo samples $\yyl\sqb{k}$.

\begin{theorem}[Multiband Unfolding from Uniform Samples]
\label{th:1}
Let $\xxl$ be defined in \eqref{eq:MBsig}, with  
$\OC \DE \cb{\omega_{p}}_{p=0}^{\bndsset-1}$, $\# \OC =P$ and $\yyl\sqb{k}$, $k \iZ$ denote its modulo samples with period $\ts$. Let $k_{0} \in \Z$ denote the index of the first modulo fold. 
Let $\beta \in \cb{z \in 2\lambda\Z : z \geq \maxn{\varphi}}$. If $\quad k_{0} > N P+1$, and
\begin{align}
\label{eq:th_rate}
\ts < \frac{1}{ 2^{\bndsset-1} \ob e}, \quad  
N \geq \cil{ \frac{\log \rob{\lambda} - \log \rob{\bndsset \beta}}{\log \rob{\ts P 2^{\bndsset-1}\ob e}}},
\end{align}
then the samples  $\xxl\sqb{k}=\xxl\big|_{t=k\ts}$ can be recovered from $\yyl\sqb{k}$. 
\end{theorem}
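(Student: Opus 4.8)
\medskip\noindent\textbf{Proof plan.}
The idea is to recover the lattice-valued residual sequence $\rrl$ of \eqref{eq:decomp} by a causal recursion carried out in the filtered domain, after which $\xxl\sqb{k}=\yyl\sqb{k}+\rrl\sqb{k}$. I would begin from the filtered identity \eqref{eq:filt_decomp}, namely $(\Psi^{N}\ast\yyl)\sqb{k}=(\Psi^{N}\ast\xxl)\sqb{k}-(\Psi^{N}\ast\rrl)\sqb{k}$, and record two structural facts about $\Psi^{N}=\psi_{0}^{\ast N}\ast\cdots\ast\psi_{P-1}^{\ast N}$: it is a causal FIR filter with support $\cb{0,\ldots,NP}$, and its leading tap is the unit $\Psi^{N}\sqb{0}=\prod_{p}\bigl(\psi_{p}\sqb{0}\bigr)^{N}=(-1)^{NP}$.

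Next I would control the smooth term. Lemma~\ref{lem:1} together with $\beta\geq\maxn{\varphi}$ gives $\maxn{\Psi^{N}\ast\xxl}\leq P(\ts 2^{P-1}\ob e)^{N}\beta$; since $\ts<1/(2^{P-1}\ob e)$ the base of the exponential lies in $(0,1)$, so the lower bound on $N$ in \eqref{eq:th_rate} forces $\maxn{\Psi^{N}\ast\xxl}<\lambda$, and in particular both the real and the imaginary part of each $(\Psi^{N}\ast\xxl)\sqb{k}$ have modulus less than $\lambda$. For the initialization, note that by definition of the first fold $\rrl\sqb{k}=0$ for $k<k_{0}$, and that the centered modulo acting coordinatewise places each $\rrl\sqb{k}$ on the lattice $\Lambda\DE 2\lambda(\Z+\jmath\Z)$, whose points are $2\lambda$-separated in each coordinate; the hypothesis $k_{0}>NP+1$ ensures a fold-free prefix at least as long as $\Psi^{N}$, so that the support of $\Psi^{N}$ placed at index $k_{0}$ meets no folded sample other than $\rrl\sqb{k_{0}}$ itself.

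The heart of the argument is the recursion for $k=k_{0},k_{0}+1,\ldots$. Rearranging \eqref{eq:filt_decomp} as $(-1)^{NP}\rrl\sqb{k}=c\sqb{k}+(\Psi^{N}\ast\xxl)\sqb{k}$, with $c\sqb{k}\DE-(\Psi^{N}\ast\yyl)\sqb{k}-\sum_{m=1}^{NP}\Psi^{N}\sqb{m}\,\rrl\sqb{k-m}$, I observe that $c\sqb{k}$ depends only on the data $\yyl$ and on residual values at indices $<k$, which are either zero (for $k=k_{0}$, by the previous paragraph) or already recovered (induction); hence $c\sqb{k}$ is a known complex number, while the left-hand side lies on $\Lambda$. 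Since the two sides differ by less than $\lambda$ in each coordinate, $(-1)^{NP}\rrl\sqb{k}$ is the unique lattice point nearest $c\sqb{k}$, so rounding each coordinate of $c\sqb{k}$ to the nearest integer multiple of $2\lambda$ returns $(-1)^{NP}\rrl\sqb{k}$, hence $\rrl\sqb{k}$. Combining this with $\rrl\sqb{k}=0$ for $k<k_{0}$ gives $\xxl\sqb{k}=\yyl\sqb{k}+\rrl\sqb{k}$ for every $k$.

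I expect the main obstacle to be the interface between Lemma~\ref{lem:1} and this decoding step: the decay estimate must be turned into the \emph{strict} inequality $\maxn{\Psi^{N}\ast\xxl}<\lambda$ demanded by unambiguous lattice rounding, which is precisely what pins down the admissible $N$ in \eqref{eq:th_rate}. A closely related subtlety --- and the reason the integer-tap \usalg scheme of \cite{Bhandari:2020:Ja} does not transfer verbatim --- is that the complex, non-integer taps of $\Psi^{N}$ render $\Psi^{N}\ast\rrl$ non-lattice-valued, so $\rrl$ cannot be peeled off in a single rounding; the recursion above circumvents this, working exactly because the leading tap is a unit and the fold-free prefix guaranteed by $k_{0}>NP+1$ introduces only one new unknown at each step.
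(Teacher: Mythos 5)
Your proposal is correct and follows essentially the same route as the paper: it uses the fold-free prefix guaranteed by $k_{0}>NP+1$ to initialize, Lemma~\ref{lem:1} with the choice of $N$ in \eqref{eq:th_rate} to push $\maxn{\Psi^{N}\ast\xxl}$ below $\lambda$, and a sequential peeling recursion in the $\Psi^{N}$-filtered domain that exposes one new residual value per step through a unit filter tap. The only difference is cosmetic --- you solve for $\rrl\sqb{k}$ via the causal leading tap $\Psi^{N}\sqb{0}=(-1)^{NP}$ and explicit rounding to the lattice $2\lambda(\Z+\jmath\Z)$, whereas the paper normalizes the opposite tap and extracts the same quantity as $\MO{(\Psi^{N}\ast\yyl)\sqb{\widetilde{k}}}-(\Psi^{N}\ast\yyl)\sqb{\widetilde{k}}$, which is the identical lattice-rounding operation in disguise.
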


\noindent\textit{Proof.} We show a constructive proof for reconstructing $\xxl\sqb{k}$, $\forall k \in \Z$, considering: \begin{enumerate*}
\item $k < k_{0}$ \item $ k=k_{0}$ \item $k> k_{0}$
\end{enumerate*}.
\begin{enumerate}[label = $\arabic*)$,leftmargin=3em, itemsep=5pt]
\item $\mathbf{k < k_{0}}$\textbf{:} The modulo signal has no folds so $\xxl\sqb{k} = \yyl\sqb{k}$.
\item $\mathbf{k = k_{0}}$\textbf{:} Compute the filter $\Psi$ using \eqref{eq:psi} with $\Omega_\mathsf{C}$\footnote{For $\ts > \TNQ$  (undersampling), then $\OC=\cb{\omega \bmod{\os} \ : \ \omega \in \OC}$.}. Apply it to get \eqref{eq:filt_decomp} and then apply $\MO{\cdot}$ on both sides to get
\begin{align*}
\MO{\Psi^{N}\ast \yyl} = \MO{\MO{\rob{\Psi^{N}\ast \xxl}} - \MO{\Psi^{N}\ast \rrd}}.
\end{align*}
From Lemma \ref{lem:1}, $\ts$ and $N$ satisfying \eqref{eq:th_rate} yield, 
\begin{align}
\label{eq:deconvG}
\maxn{\Psi^{N}\ast \xxl} &\leq P (\ts 2^{(P-1)}\ob e)^{N}\maxn{\varphi} \leq \lambda.
\end{align} 
For the \textit{residual component}, note that $\rrd\sqb{k} = 0$  for all $k \in [k_{0}- 2NP, k_{0})$. Define $\widetilde{k}= k_{0} - N\bndsset$, then:
\begin{align}
\rob{\Psi^{N} \ast \rrd}\sqb{\widetilde{k}} &= 
\sum\limits_{\abs{p} \leq \flr{\tfrac{NP}{2}}} \rrd\sqb{\widetilde{k} - p }\Psi^{N}\sqb{p} \notag \\
&= \rrd\sqb{k_{0}}\Psi^{N}\sqb{-NP}.
\label{eq:deconvR}
\end{align}
Since $\Psi^{N}$ is known, we can normalize it such that $\Psi^{N}\sqb{-NP}=1$. Because $\rrd\sqb{k} \in 2\lambda\Z$ we have $\MO{\rob{\Psi^{N}\ast  \rrd}\sqb{\widetilde{k}}} = \MO{\rrd\sqb{k_{0}}} =0$.  Combining with \eqref{eq:deconvG}, we get 
$$\MO{\rob{\Psi^{N}\ast  \yyl}\sqb{\widetilde{k}}} = \rob{\Psi^{N}\ast \xxl}\sqb{\widetilde{k}}.$$ Hence,
\begin{equation}
    \rrd \sqb{k_{0}} = \MO{\rob{\Psi^{N}\ast\yyl}\sqb{\widetilde{k}}} - \rob{\Psi^{N}\ast\yyl}\sqb{\widetilde{k}},
\end{equation}
and the recovered sample is $\xxl\sqb{k_{0}} = \yyl\sqb{k_{0}} + \rrd\sqb{k_{0}}$. 

\item$\bold{k > k_{0}}$\textbf{:} After recovering $\rrd\sqb{k_{0}}$, define the corrected signal  $\yyl^{(k_{0})} \DE \rob{\yyl + \rrd\sqb{k_{0}}\delta\sqb{m - k_{0}}}$. The first fold in $\yyl^{(k_{0})}$ occurs at $k > k_{0}$. Thus, we apply step $2)$ recursively to $\yyl^{(k_{0})}$, with $\widetilde{k} =k_{0}-NP+1 $, yielding $\rrd\sqb{k_{0}+1}$. Repeating steps $2)$ and $3)$  allows full recovery of $\xxl\sqb{k}$. \hfill $\blacksquare$
\end{enumerate}

Starting from the modulo samples of a MB signal $\xxl$, Algorithm \ref{alg:1} recovers the undersampled sequence\footnote{The condition on $k_{0}$ is discussed in \cite{Romanov:2019:J}.} $\xxl$. If $\ts$ is also an ``alias-free rate'' \cite{Lin:1998:J}, 
recovery of $\xxl$ is possible. 

\begin{algorithm}[!t]
\SetAlgoLined
{\bf Input:} $\{\yyl, \OC, \lambda, N\}$ (computed in \eqref{eq:th_rate}). \\
{\bf{Result: }}$\xxl[k]$, recovered multiband signal.
\begin{enumerate}[label = $\arabic*)$,leftmargin=*, itemsep=-1pt]
\item Compute the FIR filter $\Psi^{N}$ of length $N\bndsset+1$ using \eqref{eq:psi}.
\item Normalize: $\Psi^{N}\sqb{k} \leftarrow \Psi^{N}\sqb{k} / (\Psi^{N}\sqb{ -\flr{{NP}/{2}}})$. 
\item Initialize: $\yyl_{\Psi}^{(0)} \leftarrow \Psi^{N} \ast \yyl$.
\item For $k \leq 2NP$, set $\xxl\sqb{k} = \yyl\sqb{k}$,
$\yyl_{\Psi}^{(k+1)} = \yyl_{\Psi}^{(k)}$.
\item For $k > 2NP$, perform:\\
$\xxl\sqb{k} = \yyl\sqb{k} +
\rob{\MO{{\yyl_{\Psi}^{(k)}}} - \yyl_{\Psi}^{(k)}}
\sqb{k -N P}$. \\
$\yyl_{\Psi}^{(k+1)} \leftarrow \yyl_{\Psi}^{(k)} + (\xxl\sqb{k} - \yyl\sqb{k}) \rob{\Psi^{N} \ast \delta\sqb{m - k}}$.
\end{enumerate}
\caption{Multiband Recovery via Unlimited Sampling}
\label{alg:1}
\end{algorithm}

\section{Experimental Validation}
\begin{figure*}[!t]
\centering
  \captionsetup{width=.85\linewidth}
\includegraphics[width =0.85\textwidth]{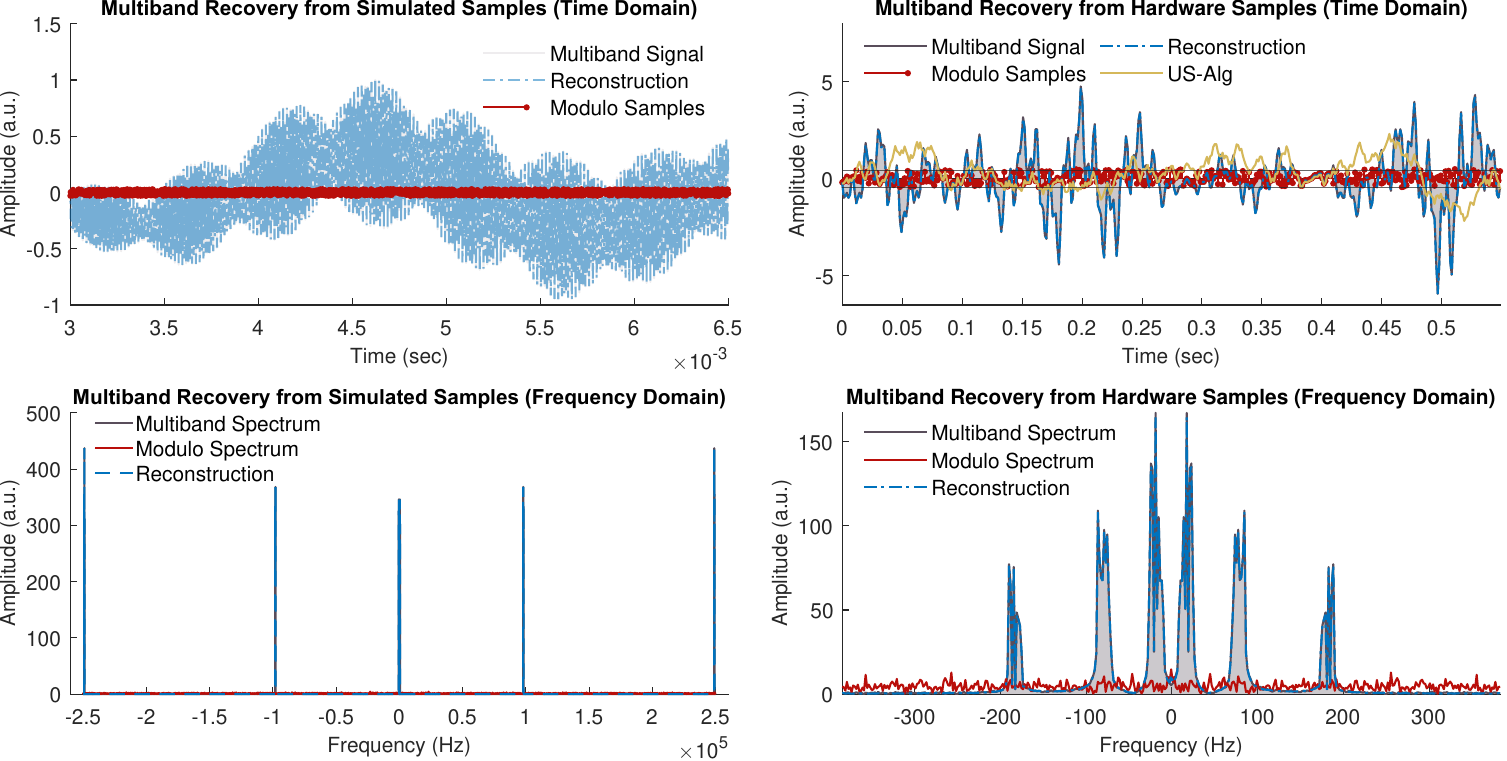}
\caption{Reconstruction of a multiband signal. Left: Numerical experiment where the multiband spectrum approaches the first Nyquist zone, demonstrating an ``alias-free'' \cite{Lin:1998:J} but undersampled scenario. Right: Reconstruction from $7$-bit quantized samples acquired via hardware. The dynamic range improvement is $13.7\times$, and the reconstruction MSE is $2.7\times 10^{-3}$. As shown,  recovery using \usalg fails for the given $\ts$.}
\label{fig:NumExp}
\end{figure*}

\begin{figure}[t]
\centering
  \captionsetup{width=.7\linewidth}
\includegraphics[width =0.6\textwidth]{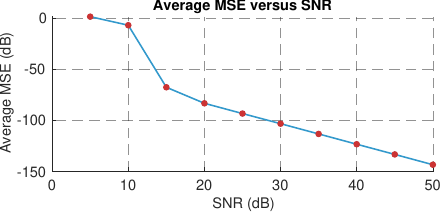}
\caption{Reconstruction performance under additive white Gaussian noise. The proposed algorithm maintains stable recovery down to $20 \ \mathrm{dB}$.}
\label{fig:NoiseExp}
\end{figure}

\bpara{Performance.} We numerically validate our method over $621$ random realizations with $\bndsset = 6$ bands of width $f_{\mathsf{B}} = 400 \ \mathrm{Hz}$, generated per \cite{Mishali:2009:J}. The modulation frequencies randomly span from $f_{\mathsf{B}}$ to $\rob{12.5/\ts} - f_{\mathsf{B}}$. Signals are sampled at $\ts = 2.5 \times 10^{-5} \ \mathrm{s}$, satisfying \eqref{eq:th_rate}, with a modulo threshold $\lambda = 0.01 = {\norm{\xxl}_{\infty}}/{100}$. All recoveries achieve machine-precision. \fig{fig:NumExp} illustrates the diversity of the spectrum that can be handled.

\bpara{Stability.} To empirically test the noise sensitivity of the proposed algorithm, we fix $f_{\mathsf{C}}=\cb{\pm 9.7, \pm 15.5, \pm 23.5} \ \mathrm{kHz}$ with $f_{\mathsf{s}}=20 \ \mathrm {kHz}$ and set the dynamic range gain to $\approx 6.6 \times$. We add gradually increasing white Gaussian noise $\eta$ to the modulo samples from level $50$ to $5$ dB. The filter order is chosen such that $\maxn{\varphi \ast \rob{\xxl + \eta}} \leq \lambda$ for each noise level. The average MSE over $200$ experiments is plotted versus SNR in \fig{fig:NoiseExp}, showing low MSE up to a noise level of $15$ dB.

\bpara{Hardware Experiment.} We validate our algorithm via hardware experiments. The MB signal $\xxl$ is parametrized by $f_{\mathsf{C}} = \{25.64, 79.77, 182.34\}$ and $f_{\mathsf{B}} = 22.04 \ \mathrm{Hz}$, and is digitalized with 7-bit resolution, as in \cite{Guo:2025:C}. The maximum amplitude is $\maxn{\xxl}=5.91= 13.7\lambda$ with modulo threshold $\lambda=0.43$. The choice of sampling period $\ts = 1.3 \times 10^{-3} \ \mathrm{s} \approx 0.5T_{\mathsf{Nyq}}$ compensates for quantization noise and distortion. Reconstruction results are shown in \fig{fig:NumExp}, achieving an MSE of $2.7 \times 10^{-3}$, while \usalg  \cite{Bhandari:2020:Ja} fails under the same setting.

\section{Conclusions}
We propose a novel method for recovering multiband signals from modulo samples by leveraging their underlying Fourier structure, enabling sub-Nyquist recovery despite the non-bandlimited nature of modulo signals. Our approach includes theoretical guarantees for undersampled and bandpass scenarios, and is validated through both noisy simulations and hardware experiments. These results advance the Unlimited Sensing Framework in sub-Nyquist regimes and offer practical benefits for high-dynamic-range, high-resolution sampling in radar and communications. Future research will focus on overcoming practical limitations, including handling scenarios with an unknown number of bands and spectral aliasing, by utilizing multi-channel architectures that go beyond the prior work in \cite{Guo:2025:C, Guo:2025:Ca}. Another avenue is physical realizations of the \madc that can operate at realistic multiband frequencies, with recent developments presented in \cite{Zhu:2025:C}.

\section*{Acknowledgement} The authors thank the Associate Editor for the careful handling of the manuscript and the reviewers for their constructive comments. They also acknowledge discussions with Yuliang Zhu and Ruiming Guo. AB gratefully acknowledges inspiring discussions with Henry Landau and Yoram Bresler.

\section{Supplementary Material}
\setcounter{equation}{0}
\renewcommand{\theequation}{S\arabic{equation}}

Let $\OCt \DE \OC \setminus \cb{\omega_{p}}$, and define $\vphiw[k] \DE \varphi_{p}\sqb{k}e^{-\jmath \omega_{p}k\ts}$. 
For the filter $\Psi_{\OC}$ defined: 
\begin{equation}
\label{eq:psi}
\psioc \DE \psi_{0} \ast \psi_{1} \ast \cdots \psi_{P-1},  \quad \psi_{p} \DE \begin{bmatrix}
-1 & e^{\jmath \oc\ts}\end{bmatrix}, \quad \psioc^{N} \DE \psioc \ast \psioc^{N-1}.
\end{equation} 
Prove by induction the identity
\begin{equation}
\label{eq:Psi_vphi_a}
    \rob{\psioc^{N} \ast \vphiw }\sqb{k} = \rob{\psioct^{N} \ast \sqb{\rob{\Delta^{N}\ast\varphi_{p}}\sqb{m}e^{-\jmath \omega_{p}(m)\ts}} }\sqb{k}.
\end{equation}
\begin{proof}
\bpara{Case $N=1$:} The frequency response of $\psioc$ is $ \widehat{\psioc}(\omega)=\prod_{p=0}^{\bndsset-1}\rob{1-e^{-\jmath(\omega- \omega_{p})\ts}}$. The filter coefficients are:
\begin{equation}
\label{eq:Psi_elem_a}
        \Psi_{\OC}\sqb{k} = (-1)^{k} \enw{k}{\OC}, \quad    \enw{k}{\OC} \DE \sum_{\substack{\divset \subseteq \OC \\ \abs{\divset}=k}}\prod_{w \in \mathcal{S}}e^{-\jmath \omega \ts}, \quad \abs{\enw{k}{\OC}} \leq \binom{\#\OC}{k},
\end{equation}
where $ \enw{k}{\OC}$ denotes the elementary symmetric polynomial of degree $k$ in $P$ frequencies $\OC$, following from Newton's identities. We also use the recursive definition $\enw{k}{\OC} = \enw{k-1}{\OCt}e^{-\jmath  \omega_{p}\ts}+\enw{k}{\OCt}$. Computing the LHS of \eqref{eq:Psi_vphi_a} with $\psioc$:
\begin{align*}
     \rob{\psioc \ast \vphiw}\sqb{k} & = \sum_{n=0}^{P}(-1)^{n}\rob{\enw{n-1}{\OCt}e^{-\jmath  \omega_{p}\ts}+\enw{n}{\OCt}}\varphi_{p}\sqb{k-n}e^{\jmath \omega_{p}(k-n)\ts} \\
     &\stackrel{\widetilde{n}=n-1}{=} e^{-\jmath \omega_{p}k\ts}\rob{\sum_{\widetilde{n}=-1}^{P-1}(-1)^{\widetilde{n}+1}\enw{\widetilde{n}}{\OCt}e^{\jmath \omega_{p}\widetilde{n}\ts}\varphi_{p}\sqb{k-\widetilde{n}-1} +\sum_{n=0}^{P}(-1)^{n}\enw{n}{\OCt}e^{\jmath  \omega_{p}n\ts}\varphi_{p}\sqb{k-n} } \\
     &=  e^{-\jmath \omega_{p}k\ts}\rob{0 + \sum_{n=0}^{P-1}(-1)^{n}\enw{n}{\OCt}e^{\jmath  \omega_{p}n\ts}\rob{\varphi_{p}\sqb{k-n} -\varphi_{p}\sqb{k-n-1}} + 0} \\
     &= e^{-\jmath \omega_{p}k\ts}\sum_{n=0}^{P-1}(-1)^{n}\enw{n}{\OCt}e^{\jmath \omega_{p}n\ts}(\Delta\ast\varphi_{p})\sqb{k-n}.
\end{align*}

\bpara{Inductive Step: } Assume the identity \eqref{eq:Psi_vphi_a} holds for some $N$.

\bpara{Prove for $N+1$:} From the definition $ \rob{\psioc^{N+1} \ast \vphiw}\sqb{k} =  \rob{\psioc \ast \psioc^{N} \ast \vphiw}\sqb{k}$, we compute:
\begin{align}
\label{eq:N1case}
        \rob{\psioc^{N+1} \ast \vphiw}\sqb{k} &= \sum_{n=0}^{P} (-1)^{n}\rob{\enw{n-1}{\OCt}e^{-\jmath  \omega_{p}\ts}+\enw{n}{\OCt}}\rob{ \psioc^{N} \ast \vphiw}[k-n] \\
        &\stackrel{\widetilde{n}=n-1}{=} \sum_{\widetilde{n}=-1}^{P-1}(-1)^{\widetilde{n}+1}\enw{\widetilde{n}}{\OCt}e^{-\jmath  \omega_{p}\ts} \rob{ \psioc^{N} \ast \vphiw}\sqb{k-\widetilde{n}-1} + \sum_{n=0}^{P}(-1)^{n}\enw{n}{\OCt}\rob{ \psioc^{N} \ast \vphiw}\sqb{k-n} \notag \\
        &= 0 + \sum_{n=0}^{P-1}(-1)^{n}\enw{n}{\OCt}\rob{\rob{ \psioc^{N} \ast \vphiw}[k-n] - \rob{ \psioc^{N} \ast \vphiw}[k-n-1]e^{-\jmath  \omega_{p}\ts}} + 0 \notag
\end{align}
For the $N^{\mathrm{th}}$ order filter $\psioc^{N}$,
\begin{gather}
\label{eq:Psi_Na}
    \psioc^{N}\sqb{k} = (-1)^{k}\sum_{\mat{v} \in \mathcal{V}(N,k)} \prod_{i = 0}^{N-1}\enw{v_{i}}{\OC}, 
    \quad \mathcal{V}(N,k) \DE \left\{ \mat{v} \in \N^{N} \mid \sum_{i=1}^{N} \mat{v}_i = k \right\}. 
\end{gather}
where the set $ \mathcal{V}(N,k)$ consists of all length-$N$ vectors of non-negative integers whose elements sum to $k$.
Using the induction assumption, replacing $\psioc^{N}$  according to \eqref{eq:Psi_Na} and computing the convolution, 
\begin{align}
\label{eq:N1_step}
    &(\psioc^{N}\ast\vphiw)[k-n] - (\psioc^{N}\ast\vphiw)[k-n-1] e^{-\jmath  \omega_{p}\ts} \\ 
    &= e^{-\jmath \omega_{p}(k-n)\ts}\sum_{l=0}^{N(P-1)}(-1)^{l}\sum_{\mat{v} \in \mathcal{V}(N,l)} \prod_{i = 0}^{N-1}\enw{v_{i}}{\OCt}e^{-\jmath \omega_{p}l\ts}(\Delta^{N+1}\ast \varphi_{p})\sqb{k-n-l} \notag.
\end{align}
Substituting \eqref{eq:N1_step} into \eqref{eq:N1case} and re-arranging,
\begin{align*}
    \rob{\Psi_{\OC}^{N+1}\ast\vphiw}\sqb{k} & =\sum_{n=0}^{P-1}(-1)^{n}\enw{n}{\OCt}e^{-\jmath \omega_{p}(k-n)\ts}\sum_{l=0}^{N(P-1)}(-1)^{l}\sum_{\mat{v} \in \mathcal{V}(N,l)} \prod_{i = 0}^{N-1}\enw{v_{i}}{\OCt}e^{\jmath \omega_{p}l\ts}(\Delta^{N+1}\ast\varphi_{p})\sqb{k-n-l} \\
    &=\sum_{n=0}^{(N+1)(P-1)}(-1)^{n}\sum_{\mat{v} \in \mathcal{V}(N+1,n)} \prod_{i = 0}^{N-1}\enw{v_{i}}{\OCt}(\Delta^{N+1}\ast \varphi_{p})\sqb{k-n}e^{-\jmath \omega_{p}(k-n)\ts} \\ 
    & = \rob{\psioct^{N+1} \ast \sqb{\rob{\Delta^{N+1}\ast\varphi_{p}}\sqb{m}e^{-\jmath \omega_{p}(m)\ts}} }\sqb{k}.
\end{align*}
Thus, the identity \eqref{eq:Psi_vphi_a} holds for $N+1$, which completes the proof.
\end{proof}



\end{document}